\documentclass[sigconf,nonacm]{acmart}

\usepackage{amsmath}
\usepackage{graphicx}
\usepackage{acronym}
\usepackage{siunitx}
\usepackage{subcaption}
\usepackage{amsthm}
\newtheorem{lemma}{Lemma}
\newtheorem{theorem}{Theorem}

\acrodef{MC}{molecular communication}
\acrodef{CIR}{channel impulse response}
\acrodef{PBS}{particle-based simulation}
\acrodef{1D}{one-dimensional}
\acrodef {3D}{three-dimensional}
\acrodef{PDF}{probability density}
\acrodef{CVS}{cardiovascular system}
\acrodef{CFD}{computational fluid dynamics}

\acmYear{2026}
\copyrightyear{2026}

\title{Analytical Modeling of Dispersive Closed-loop MC Channels with Pulsatile Flow}

\author{Theofilos Symeonidis\textsuperscript{\footnotesize 1},
Fardad Vakilipoor\textsuperscript{\footnotesize 1},
Robert Schober\textsuperscript{\footnotesize 1},
Nunzio Tuccitto\textsuperscript{\footnotesize 2},
Maximilian Schäfer\textsuperscript{\footnotesize 1}}

\affiliation{%
  \institution{\textsuperscript{1}Friedrich-Alexander-Universität Erlangen-Nürnberg (FAU), Erlangen, Germany\\
  \textsuperscript{2}University of Catania, Catania, Italy}
  \city{}\country{}
}

\renewcommand{\shortauthors}{Symeonidis et al.}

\begin{document}

\begin{abstract}
\Ac{MC} is a communication paradigm in which information is conveyed through the controlled release, propagation, and reception of molecules. Many envisioned healthcare applications of \ac{MC} are expected to operate inside the human body. In this environment, the \ac{CVS} acts as the physical channel, which forms a closed-loop network where particle transport is mainly governed by the combined effects of diffusion and flow. Despite the fact that physiological flows in many parts of the human body are inherently pulsatile due to the cardiac cycle, most existing models for dispersive closed-loop \ac{MC} channels assume a constant flow velocity. In this paper, we present a time-variant \ac{1D} channel model for dispersive closed-loop \ac{MC} systems with pulsatile flow. We derive an analytical expression for the \ac{CIR}, which follows a wrapped Normal distribution with time-variant mean and variance. The obtained model reveals the cyclostationary nature of the channel and quantifies the influence of pulsation on the temporal concentration profile compared to steady-flow systems. Finally, the model is validated by \ac{3D} \acp{PBS}, showing excellent agreement and enabling an efficient analytical characterization of the channel.
\end{abstract}



\keywords{Molecular Communication, Closed-Loop Channels, Pulsatile Flow, Advection-Diffusion, Aris-Taylor Dispersion}

\maketitle

\acresetall

\section{Introduction}
\Ac{MC} is an emerging interdisciplinary research field that facilitates information transfer in biological environments. MC has a wide range of potential biomedical and healthcare applications, including health monitoring, early disease detection, and targeted treatment~\cite{6548006,8710366,8567997}. 
Most of these applications are expected to operate inside the human body, specifically in the \ac{CVS}, which constitutes the physical propagation channel, where signaling molecules are transported assisted by blood flow~\cite{FELICETTI201627,10433775}. Therefore, it is crucial to derive meaningful models that account for the physical properties and behavior of the channel.

Many existing studies on flow-based \ac{MC} consider molecule transport in simplified vascular environments. In these studies, the propagation of signaling molecules is typically modeled as an analytical advection-diffusion process in a single straight pipe~\cite{8647632,9359669,6548006}. Larger vascular structures have been represented by equivalent network models that capture branching and flow distribution~\cite{jakumeit2026mixtureinversegaussianshemodynamic,jakumeit2025molecularsignalreceptioncomplex}, but still rely on non-recirculating flow paths. Furthermore, several experimental platforms have been developed to investigate flow-based molecular transport under controlled conditions~\cite{Furubayashi2017DesignAW,8924625}.

However, in practical in-body \ac{MC} systems, the carrier fluid is not confined to a non-recirculating propagation path but instead moves within a closed-loop network due to the continuous circulation of blood. 
This recirculation fundamentally alters the channel characteristics, introducing long-term memory, repeated molecule arrivals, and strong coupling between local propagation dynamics and the network's overall structure~\cite{brand2025closedloop}. Therefore, suitable channel models have to explicitly account for the closed-loop nature of the \ac{CVS} in order to accurately describe the transport of signaling molecules in realistic physiological environments.

Recent studies have started to investigate \ac{MC} in closed-loop systems~\cite{ieee8618321,brand2025closedloop}. These studies demonstrate that circulation leads to very different channel characteristics compared to non-recirculating systems, including extended delay spreads, persistent inter-symbol interference, and a dependence of the received signal on the topology. Experimental studies further confirm the presence of these characteristics, showing that molecule transport in closed-loop systems exhibits distinctive temporal patterns that cannot be replicated by open-loop models~\cite{vakilipoor2025cam,BrandICC2024ClosedLoop}. These observations underline the importance of accounting for the closed-loop nature of the channel in the development of suitable channel models.

Most existing studies on closed-loop \ac{MC} systems are based on the assumption of steady flow conditions~\cite{ieee8618321,brand2025closedloop,vakilipoor2025cam,BrandICC2024ClosedLoop}, despite the fact that blood flow in the \ac{CVS} is inherently pulsatile, due to the cardiac cycle. This physiological feature is well established in the hemodynamics and fluid mechanics literature~\cite{https://doi.org/10.1002/j.2040-4603.2016.tb00700.x,https://doi.org/10.1002/smll.201904032} and is further supported by in-vivo measurements and numerical studies that provide realistic time-variant blood flow shapes~\cite{https://doi.org/10.1002/fld.1966,Holdsworth_1999}.

\begin{figure*}[t]
  \centering
  \includegraphics[width=0.9\textwidth]{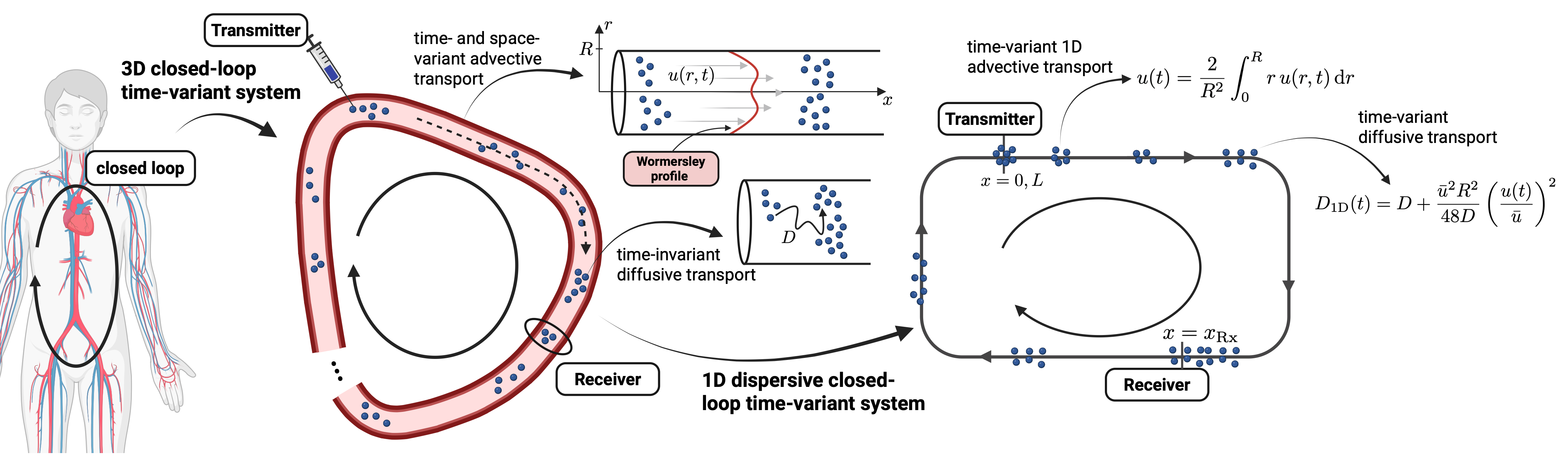}
  \vspace*{-0.5cm}
  \caption{\textbf{Closed-loop molecular communication system and model reduction.}
Signaling molecules are released by the transmitter and propagate through a vascular-like closed-loop channel under pulsatile flow (left). The resulting transport is governed by a time- and space-variant advection-diffusion process with Womersley velocity profile (center). In the dispersive regime, the system can be approximated by a one-dimensional model with time-variant effective velocity and diffusion coefficient (right).}
  \label{fig:system_model}
\end{figure*}

A number of \ac{MC} studies have adopted pulsatile flow models and demonstrated their influence on molecule transport and channel responses. In~\cite{6548006}, an analytical model for \ac{MC} in the cardiovascular system with pulsatile blood flow is developed, characterizing the time-variant delivery rate at the receiver. However, the resulting formulation is highly complex and does not yield a tractable analytical \ac{CIR}, limiting its applicability for communication-theoretic analysis. In~\cite{11157771}, nanoparticle propagation with pulsatile flow is investigated using numerical particle simulations, revealing significant deviations of the \ac{CIR} compared to the steady-flow case. However, the study does not provide an analytical channel model, neglects diffusive transport, and is tailored to a specific physiological scenario, limiting its general applicability. In~\cite{11161338}, analytical \acp{CIR} for non-Newtonian flow in a plaque-obstructed vessel are derived and complemented by detailed \ac{CFD} simulations incorporating pulsatile flow effects. However, the study relies heavily on simulations, while the analytical models are shown to provide only rough approximations of the actual flow behavior and do not yield a unified analytical description of pulsatile channels. Furthermore, the considered scenario is highly complex and application-specific. In~\cite{lee2026correctedinversegaussianfirsthittingtimemodelingmolecular}, an analytical channel model for first-hitting-time under time-variant drift is derived. However, the model is restricted to one-dimensional propagation and focuses on first-hitting-time statistics for absorbing receivers, rather than general \acp{CIR}. Moreover, pulsatile flow is represented via a spatially uniform time-variant drift and approximations are used to obtain a tractable expression, limiting its applicability to realistic flow conditions.
In addition, none of the above works considers closed-loop propagation. 

In fact, the combined impact of flow recirculation and pulsatile flow in the \ac{CVS} is not considered in any existing \ac{MC} channel models.

Motivated by these observations, we propose a closed-loop \ac{MC} channel model that explicitly accounts for pulsatile flow, thereby capturing the combined impact of recirculation and time-variant transport dynamics. It builds on recent results in~\cite{wang2025pulsatile} showing that, in the dispersive regime~\cite{Jamali2019ChannelModeling}, pulsatile flow with a Womersley velocity ~\cite{womersley1955} can be approximated by a \ac{1D} advection-diffusion equation with time-variant velocity. 
Based on this reduced description, we derive an analytical expression for the resulting \ac{CIR}. To the best of our knowledge, this is the first analytical \ac{CIR} representation for dispersive closed-loop \ac{MC} with pulsatile flow. The main contributions of this paper are as follows:

\begin{itemize}
\item We extend an existing closed-loop \ac{MC} channel model~\cite{vakilipoor2025cam} by incorporating a physiologically motivated time-variant flow velocity, which enables the modeling of molecule transport under pulsatile flow in closed-loop environments.

\item Based on this model, we derive an analytical expression for the \ac{CIR} of dispersive closed-loop channels with pulsatile flow, which takes the form of a wrapped Normal distribution with time-variant mean and variance.

\item We validate the proposed analytical model by comparison to the results from \ac{3D} \acp{PBS} of the underlying advection-diffusion process, and study how pulsatile flow affects the \ac{CIR} compared to steady-flow closed-loop systems.
\end{itemize}

The remainder of this paper is organized as follows. 
In Section~\ref{sec:model}, we introduce the physical channel model, starting from the underlying \ac{3D} transport formulation, and derive the \ac{1D} dispersive model under pulsatile flow. 
Section~\ref{sec:channel} derives the corresponding \ac{MC} channel model, first for a straight duct and then for closed-loop channels. 
Section~\ref{sec:results} introduces the considered velocity waveforms and the \ac{3D} \ac{PBS} setup, and validates the analytical \ac{CIR} for both synthetic and physiologically motivated pulsatile flow scenarios. 
Finally, Section~\ref{sec:conclusion} concludes the paper.

\section{Physical Channel Model}\label{sec:model}

In this section, we introduce the physical channel model for the considered closed-loop \ac{MC} system with pulsatile flow. The overall system is illustrated in Fig.~\ref{fig:system_model}, where signaling molecules propagate through a cylindrical channel and may recirculate due to the closed-loop topology.
Starting from the \ac{3D} advection-diffusion equation with spatially and temporally variant flow in cylindrical coordinates, as introduced by Aris~\cite{aris1956}, we obtain a tractable description of the axial transport by deriving a reduced \ac{1D} dispersive transport model for time-variant flow \cite{wang2025pulsatile}. We then adopt a Womersley flow model for the pulsatile velocity field and define the corresponding effective diffusion coefficient.

\subsection{\ac{3D} Transport Model}
In a cylindrical channel, the transport of signaling molecules is governed by \ac{3D} advection and diffusion. Following the classical
formulation by Aris~\cite{aris1956}, the particle concentration $p(x,r,t)$, where
$x$ and $r$ denote the axial and radial coordinates, respectively, and $t$ denotes
time, can be described by the radially symmetric advection-diffusion equation \cite{wang2025pulsatile}
\begin{equation}
\partial_t p(x,r,t)+u(r,t)\partial_x p(x,r,t)
\!=D\!\left(\partial_{xx}p(x,r,t)\!+\!\!\frac{1}{r}\partial_r\!\big(r\,\partial_r p(x,r,t)\big)\right),
\label{eq:3D_adv_diff}
\end{equation}
where $\partial_t$ and $\partial_x$ denote the first order derivatives with respect to time $t$ and axial coordinate $x$, respectively, $\partial_{xx}$ denotes the second order derivative with respect to the axial coordinate $x$, and $\partial_r$ denotes the first order derivative with respect to the radial coordinate $r$. Moreover, $u(r,t)$ is the axial \ac{3D} pulsatile flow velocity profile with pulsation period $T$, and $D$ is the molecular diffusion coefficient. 

\subsection{Dispersive Regime and \ac{1D} Model Reduction}
\label{sec:assumptions}

Following the classical Aris-Taylor dispersion theory and its extension to
time-variant flows, the reduction of \eqref{eq:3D_adv_diff} to a \ac{1D} approximation relies on the following assumptions~\cite{wang2025pulsatile,aris1956}:

\begin{itemize}
  \item The channel is long and slender, i.e., $R \ll L$, where $R$ and $L$
  denote the channel radius and length, respectively.
  
  \item Radial diffusion is much faster than axial transport, such that the
  concentration becomes approximately uniform over the cross section, which requires
\begin{equation}
\frac{R^2}{D} \ll \frac{L}{\bar{u}},
\label{eq:radial_fast}
\end{equation}

 where $\bar{u}$ is the temporal mean of the cross-sectionally averaged axial
flow velocity over one pulsation period $T$, i.e.,
\begin{equation}
\bar{u}
=
\frac{2}{R^{2}T}
\int_{0}^{T}
\int_{0}^{R}
r\,u(r,t)\,\mathrm{d}r\,\mathrm{d}t .
\label{eq:u_bar_def}
\end{equation}

\item Axial dispersion is dominated by shear-induced spreading, which arises from the combined effect of transverse diffusion and the non-uniform flow profile across the channel cross section. In particular, molecular diffusion in the axial direction alone is negligible, namely, 
\begin{equation}
\frac{L}{\bar{u}} \ll \frac{L^2}{D}
\label{eq:axial_dispersion_condition}
\end{equation}
  
\item The flow is laminar, fully developed, and radially symmetric, and the carrier fluid is Newtonian.
\end{itemize}

Under these conditions, solute transport enters the dispersive regime, in which the concentration becomes rapidly homogenized over the cross-section, and the longitudinal dynamics can be described by an effective \ac{1D} advection-diffusion equation with time-variant coefficients. Accordingly, the concentration can be represented by its cross-sectional average as follows
\[
p(x,t) = \frac{2}{R^2}\int_0^R p(x,r,t)\, r\,\mathrm{d}r.
\]
For the case of Womersley velocity profiles, i.e., pulsatile flow profiles in cylindrical channels described by the Womersley solution~\cite{womersley1955}, \eqref{eq:3D_adv_diff} can be transformed into an effective \ac{1D} advection-diffusion equation describing the axial particle transport~\cite[Eq.~(41)]{wang2025pulsatile}
\begin{equation}
\partial_t p(x,t)
+ u(t)\partial_x p(x,t) =
D_{\mathrm{1D}}(t) \partial_{xx} p(x,t).
\label{eq:adv_diff}
\end{equation} 
Here, $u(t)$ is the cross-sectionally averaged axial flow velocity obtained from the underlying \ac{3D} velocity profile $u(r,t)$, and $D_{\mathrm{1D}}(t)$ is the effective time-variant diffusion coefficient that accounts for the combined effect of molecular diffusion and shear-induced dispersion. In the following, we describe the effective time-variant velocity $u(t)$ and diffusion coefficient $D_{\mathrm{1D}}(t)$.

\subsection{Womersley Velocity Profile}
\label{subsec:womersley_flow}

The Womersley velocity profile is commonly used to model pulsatile flow in cylindrical vessels~\cite{womersley1955,11157771,wang2025pulsatile}. Here, we adopt this model for the \ac{3D} axial velocity profile as given in~\cite[Eq.~(31)--(39)]{wang2025pulsatile}. The axial velocity $u(r,t)$ in \eqref{eq:3D_adv_diff} is modeled as a superposition of a steady Poiseuille component and a pulsatile component composed of harmonic Womersley modes, i.e.,
\begin{equation}
u(r,t)=u_{\mathrm{P}}(r)+\Re\!\left\{u_{\mathrm{W}}(r,t)\right\},
\label{eq:u_3d_decomp}
\end{equation}
with the steady contribution $u_{\mathrm{P}}(r)=2\bar{u}\left(1-\frac{r^2}{R^2}\right)$,
where $\bar{u}$ is the cross-sectionally averaged axial velocity~\eqref{eq:u_bar_def}, and $\Re\{\cdot\}$ the real-part operator.

The pulsatile Womersley component in \eqref{eq:u_3d_decomp} is defined as
\begin{equation}
u_{\mathrm{W}}(r,t)
=
\sum_{n=1}^{N}
U_w(\omega_n)\,
\Psi_n(r)\,
e^{\mathrm{j}(n\omega t+\phi_n)},
\label{eq:u_womersley}
\end{equation}
where $U_w(\omega_n)$ denotes the complex amplitude of the $n$-th harmonic with angular frequency $\omega_n = n\omega$, $\phi_n$ is the corresponding phase shift, $N$ is the number of harmonics, and $\mathrm{j}$ is the imaginary unit. The fundamental angular frequency is given by $\omega=2\pi f$, where $f$ is the pulsation frequency.
The radial dependence of each harmonic in \eqref{eq:u_womersley} is described by the Womersley shape function
\begin{equation}
\Psi_n(r)
=
\frac{
J_0\!\left(\mathrm{j}^{3/2}a_n\right)
-
J_0\!\left(\mathrm{j}^{3/2}a_n \frac{r}{R}\right)
}{
J_0\!\left(\mathrm{j}^{3/2}a_n\right)
-
\frac{2J_1\!\left(\mathrm{j}^{3/2}a_n\right)}{\mathrm{j}^{3/2}a_n}
},
\label{eq:womersley_shape}
\end{equation}
where $J_0(\cdot)$ and $J_1(\cdot)$ denote Bessel functions of the first kind of order zero and one, respectively. The corresponding Womersley number of the $n$-th harmonic is defined as
\begin{equation}
a_n = R\sqrt{\frac{n\omega}{\nu}},
\label{eq:womersley_number}
\end{equation}
where $\nu$ in \eqref{eq:womersley_number} is the kinematic viscosity of the fluid. The Womersley number in \eqref{eq:womersley_number} quantifies the relative importance of unsteady inertial effects and viscous diffusion, and indicates how rapidly the velocity profile adapts to temporal variations of the flow~\cite{womersley1955}. Small Womersley numbers correspond to quasi-steady flow conditions where viscous effects dominate and the velocity profile remains close to parabolic, whereas large Womersley numbers indicate inertia-dominated flow with a flattened velocity profile and significant phase lag between the pressure gradient and the flow.

From this \ac{3D} description, the effective \ac{1D} velocity $u(t)$ in \eqref{eq:adv_diff} is obtained by cross-sectional averaging of the \ac{3D} velocity $u(r,t)$ in \eqref{eq:u_3d_decomp} as follows
\begin{equation}
u(t)=\frac{2}{R^2}\int_0^R r\,u(r,t)\,\mathrm{d}r
=\bar{u}\!\left(
1 + \sum_{n=1}^{N} M_n \cos(n\omega t + \phi_n)
\right),
\label{eq:u_t}
\end{equation}
where the dimensionless coefficients $M_n$ are directly related to the amplitudes of the corresponding \ac{3D} Womersley harmonics as
$M_n=\frac{U_w(\omega_n)}{\bar{u}}$.



\subsection{Time-variant Diffusion Coefficient}
Similar to the effective diffusion coefficient for time-invariant systems \cite{Jamali2019ChannelModeling}, the time-variant effective diffusion coefficient in \eqref{eq:adv_diff} is determined by the pulsatile flow velocity described by the Womersley model in Section~\ref{subsec:womersley_flow}. 
For the parameter ranges considered in this paper, the corresponding Womersley numbers $a_n$ are within the regime for which the diffusion coefficient approximation derived in \cite[Eq.~(43)]{wang2025pulsatile} can be directly applied, i.e.,
\begin{equation}
D_{\mathrm{1D}}(t)
=
D + \frac{\bar{u}^2 R^2}{48D}
\left(\frac{u(t)}{\bar{u}}\right)^2.
\label{eq:D1D}
\end{equation}

\section{\ac{MC} Channel Model}\label{sec:channel}
In this section, we derive the \ac{MC} channel model by solving the \ac{1D} advection-diffusion equation in \eqref{eq:adv_diff}. We first consider a straight duct without recirculation and obtain the corresponding solution via a sequence of variable transformations. Subsequently, we extend the model to closed-loop channels.

\subsection{Straight Duct}
\label{subsec:str_duct}

\begin{lemma}[Straight duct solution with pulsatile flow]
\label{lem:straight_duct}
For an impulsive release of particles at $t=0$ and $x=0$, the solution to \eqref{eq:adv_diff} in an infinitely long straight duct can be obtained as follows
\begin{equation}
p_{\mathrm{sd}}(x,t)
=
\frac{1}{\sqrt{2\pi\sigma^2(t)}}
\exp\!\left(
-\frac{[x-\mu(t)]^2}{2\sigma^2(t)}
\right),
\label{eq:gaussian_infinite}
\end{equation}
with time-variant mean $\mu(t)$ and variance $\sigma^2(t)$ given by
\begin{align}
&\mu(t) = \int_0^t u(s)\,\mathrm{d}s, 
&&\sigma^2(t) = 2\int_0^t D_{\mathrm{1D}}(s)\,\mathrm{d}s. \label{eq:mean_var_def}
\end{align}
Inserting the pulsatile velocity model in \eqref{eq:u_t} into \eqref{eq:mean_var_def}, an analytical expression for the mean $\mu(t)$ follows as
\begin{equation}
\mu(t)
=
\bar u\left[
t
+
\sum_{n=1}^{N}
\frac{M_n}{n\omega}
\big(
\sin(n\omega t+\phi_n)-\sin\phi_n
\big)
\right].
\label{eq:mu_closed}
\end{equation}
Moreover, substituting \eqref{eq:u_t} and \eqref{eq:D1D} into \eqref{eq:mean_var_def}
yields an analytical expression for the time-variant variance $\sigma^2(t)$, i.e., 
\begin{equation}
\begin{aligned}
\sigma^2(t)
=
2Dt
+
\frac{R^2\bar u^2}{24D}
\Bigg[
& t
+2\sum_{n=1}^{N} M_n S_n(t)
+\sum_{n=1}^{N} M_n^2 Q_n(t) \\[-0.8em]
& +2\sum_{1\le m<n\le N} M_m M_n P_{m,n}(t)
\Bigg],
\end{aligned}
\label{eq:sigma2_closed}
\end{equation}
with the coefficients 
\begin{align}
S_n(t)
&=
\frac{\sin(n\omega t+\phi_n)-\sin(\phi_n)}{n\omega},
\\[0.6ex]
Q_n(t)
&=
\frac{t}{2}
+
\frac{\sin(2n\omega t+2\phi_n)-\sin(2\phi_n)}{4n\omega},
\\[0.6ex]
P_{m,n}(t)
&=
\frac{1}{2}\Bigg[
\frac{\sin((n-m)\omega t+(\phi_n-\phi_m))
      -\sin(\phi_n-\phi_m)}{(n-m)\omega}
\nonumber\\
&\hspace*{-1cm}\phantom{=.}
+
\frac{\sin((n+m)\omega t+(\phi_n+\phi_m))
      -\sin(\phi_n+\phi_m)}{(n+m)\omega}
\Bigg],\, m<n .
\end{align}
\end{lemma}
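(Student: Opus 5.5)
The plan is to remove the time-variant advection and diffusion coefficients in \eqref{eq:adv_diff} by two changes of variables, reduce the problem to the standard heat equation, and then evaluate the integrals in \eqref{eq:mean_var_def} term by term.

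\textbf{Step 1: remove the drift.} Move to a frame comoving with the mean flow by setting $\xi = x - \mu(t)$ with $\mu(t)=\int_0^t u(s)\,\mathrm{d}s$, and define $q(\xi,t)=p(\xi+\mu(t),t)$. By the chain rule, $\partial_t q = \partial_t p + u(t)\partial_x p$. Hence \eqref{eq:adv_diff} becomes $\partial_t q = D_{\mathrm{1D}}(t)\,\partial_{\xi\xi} q$.

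\textbf{Step 2: remove the time dependence of the diffusion coefficient.} Rescale time via $\tau(t)=\int_0^t D_{\mathrm{1D}}(s)\,\mathrm{d}s$. This map is strictly increasing because $D_{\mathrm{1D}}(t)\ge D>0$ by \eqref{eq:D1D}, so it is invertible. Writing $w(\xi,\tau)=q(\xi,t)$, we have $\partial_t q = D_{\mathrm{1D}}(t)\,\partial_\tau w$. The equation therefore reduces to the standard heat equation $\partial_\tau w=\partial_{\xi\xi}w$.

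\textbf{Step 3: solve and transform back.} The impulsive release gives the initial condition $w(\xi,0)=\delta(\xi)$, since $\mu(0)=0$ and $\tau(0)=0$. On the infinite line the heat kernel is $w=(4\pi\tau)^{-1/2}\exp(-\xi^2/(4\tau))$. Substituting back $\xi = x-\mu(t)$ and $2\tau(t)=\sigma^2(t)$ yields \eqref{eq:gaussian_infinite}. As a check, this Gaussian can be inserted directly into \eqref{eq:adv_diff}: with $\dot\mu=u$ and $\dot{\sigma}^2=2D_{\mathrm{1D}}$, the advection term and the diffusion term cancel identically.

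\textbf{Step 4: closed form of the mean.} Integrate \eqref{eq:u_t} term by term, using $\int_0^t\cos(n\omega s+\phi_n)\,\mathrm{d}s=(\sin(n\omega t+\phi_n)-\sin\phi_n)/(n\omega)$. This gives \eqref{eq:mu_closed} immediately.

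\textbf{Step 5: closed form of the variance.} From \eqref{eq:D1D}, $\sigma^2(t)=2Dt+\frac{R^2\bar u^2}{24D}\int_0^t (u(s)/\bar u)^2\,\mathrm{d}s$. I would expand $(1+\sum_n M_n c_n)^2$, with $c_n=\cos(n\omega s+\phi_n)$, into three groups: the linear terms $2\sum_n M_n c_n$, the diagonal terms $\sum_n M_n^2 c_n^2$, and the cross terms $2\sum_{m<n} M_mM_n c_mc_n$.
\begin{itemize}
\item The linear terms integrate to $S_n(t)$, exactly as in Step 4.
\item For the diagonal terms, use $c_n^2=\tfrac12(1+\cos(2n\omega s+2\phi_n))$. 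Integrating gives $Q_n(t)$.
\item For the cross terms, use the product-to-sum identity $c_mc_n=\tfrac12[\cos((n-m)\omega s+\phi_n-\phi_m)+\cos((n+m)\omega s+\phi_n+\phi_m)]$. Because $m<n$, both frequencies are nonzero, and integrating gives $P_{m,n}(t)$.
\end{itemize}
Collecting the three groups yields \eqref{eq:sigma2_closed}.

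The step that needs most care is the rigorous handling of Step 2: justifying the monotone time change and confirming that the delta initial condition transforms consistently. The positivity $D_{\mathrm{1D}}\ge D>0$ settles both points. The rest is bookkeeping of the trigonometric integrals, where the restriction $m<n$ is what keeps the denominators $(n-m)\omega$ nonzero.
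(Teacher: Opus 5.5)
Your proposal is correct and follows the same route as the paper's proof: the comoving coordinate $\xi = x-\mu(t)$, the time change $\tau(t)=\int_0^t D_{\mathrm{1D}}(s)\,\mathrm{d}s$ reducing the problem to the standard heat equation, and the heat kernel with $\sigma^2(t)=2\tau(t)$. Your explicit evaluation of the trigonometric integrals for $\mu(t)$, $S_n$, $Q_n$, $P_{m,n}$ (which the paper leaves implicit) and your direct substitution check are both correct.
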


\begin{proof}
To obtain solution \eqref{eq:gaussian_infinite} for $p(x,t)$, we transform \eqref{eq:adv_diff} into a diffusion equation with constant coefficients. We first introduce the moving spatial coordinate
\begin{equation}
\xi = x - \mu(t), \qquad \mu(t) = \int_0^t u(s)\,\mathrm{d}s.
\end{equation}
where $\mu(t)$ represents the cumulative axial displacement of particles by the time-variant flow.
Next, we define the transformed concentration
\[
\tilde p(\xi,t) := p(\xi+\mu(t),t),
\]
and since $\frac{\mathrm{d}\mu(t)}{\mathrm{d}t}=u(t)$, the chain rule yields
\begin{align}
\partial_x p(x,t) &= \partial_\xi \tilde p(\xi,t),\\
\partial_{xx} p(x,t) &= \partial_{\xi\xi} \tilde p(\xi,t),\\
\partial_t p(x,t) &= \partial_t \tilde p(\xi,t) - u(t)\partial_\xi \tilde p(\xi,t).
\end{align}
Substituting these expressions into \eqref{eq:adv_diff}, the advection terms cancel and we obtain a diffusion equation with time-variant diffusion coefficient
\begin{equation}
\partial_t \tilde p(\xi,t)=D_{\mathrm{1D}}(t)\partial_{\xi\xi}\tilde p(\xi,t).
\label{eq:transf}
\end{equation}
Next, we introduce a transformed time variable
\begin{equation}
\tau(t)=\int_0^t D_{\mathrm{1D}}(s)\,\mathrm{d}s ,
\end{equation}
which normalizes the time axis such that the diffusion coefficient becomes constant, i.e., $\mathrm{d}\tau/\mathrm{d}t=D_{\mathrm{1D}}(t)$. Let $h(\xi,\tau(t)) = \tilde p(\xi,t)$. Differentiating with respect to $t$ yields
\begin{equation}
\partial_t \tilde p(\xi,t)=D_{\mathrm{1D}}(t)\partial_\tau h(\xi,\tau).
\end{equation}
Substituting into \eqref{eq:transf} leads to a diffusion equation with constant coefficients
\begin{equation}
\partial_\tau h(\xi,\tau)=\partial_{\xi\xi}h(\xi,\tau). \label{eq:trans-diff}
\end{equation}
For an impulsive release of particles at $t=0$ and $x=0$, the transformed initial condition is $h(\xi,0)=\delta(\xi)$. Solving \eqref{eq:trans-diff} in an infinitely long straight duct yields, after returning to the original variables~\cite{Jamali2019ChannelModeling}, the solution $p_\mathrm{sd}(x,t)$ for \eqref{eq:adv_diff} in \eqref{eq:gaussian_infinite}.
\end{proof}

\subsection{Extension to Closed-Loop Channels}

To account for the recirculating topology of closed-loop systems (see Fig.~\ref{fig:system_model}), we extend the solution for the infinitely long straight duct in \eqref{eq:gaussian_infinite} in the following. 


\begin{theorem}[Closed-loop solution with pulsatile flow]
\label{thm:closed_loop}
The concentration $p(x,t)$ in a closed-loop channel of length $L$, i.e., $x\in[0,L]$, with pulsatile flow can be represented by the following wrapped normal distribution
\begin{equation}
p(x,t)
=
\sum_{k=-\infty}^{\infty}
\frac{1}{\sqrt{2\pi\sigma^2(t)}}
\exp\!\left(
-\frac{[x-\mu(t)+kL]^2}{2\sigma^2(t)}
\right),
\label{eq:cir}
\end{equation}
where $k\in\mathbb{Z}$ indexes the contributions associated with different numbers of circulations around the loop. In particular, the term with $k=0$ corresponds to the contribution without a complete circulation, while the terms with $k\neq 0$ capture molecules that have traversed the loop one or multiple times. Equation \eqref{eq:cir} thus extends the straight-duct result in \eqref{eq:gaussian_infinite} to a recirculating channel while preserving the time-variant mean displacement $\mu(t)$ and variance $\sigma^2(t)$ derived in \eqref{eq:mean_var_def}.
\end{theorem}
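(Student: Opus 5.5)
The plan is to model the closed loop as the interval $[0,L]$ with periodic boundary conditions, i.e., $p(0,t)=p(L,t)$ and $\partial_x p(0,t)=\partial_x p(L,t)$. Equation~\eqref{eq:adv_diff} holds on the circle $\mathbb{R}/L\mathbb{Z}$ with initial condition $p(x,0)=\sum_{k}\delta(x+kL)$ restricted to $[0,L]$, which is an impulse at $x=0$ on the loop. The coefficients $u(t)$ and $D_{\mathrm{1D}}(t)$ depend only on time, not on $x$. Hence \eqref{eq:adv_diff} is linear and translation invariant in $x$, and the method of images applies: the periodic solution is the periodization of the free-space solution.

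\textbf{Step 1 (superposition of images).} Consider the infinite line with initial datum $\sum_{k\in\mathbb{Z}}\delta(x+kL)$. Because \eqref{eq:adv_diff} is linear and invariant under $x\mapsto x+kL$, Lemma~\ref{lem:straight_duct} gives the solution for each image as $p_{\mathrm{sd}}(x+kL,t)$. Summing these yields exactly \eqref{eq:cir}.

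\textbf{Step 2 (convergence and well-definedness).} For $t>0$ we have $\sigma^2(t)\ge 2Dt>0$, since $D_{\mathrm{1D}}\ge D$ by \eqref{eq:D1D}. The Gaussian tails in $k$ therefore decay like $e^{-k^2L^2/(2\sigma^2)}$, so the series converges uniformly on compact sets together with all its $x$- and $t$-derivatives. This justifies differentiating term by term, so the sum satisfies \eqref{eq:adv_diff}.

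\textbf{Step 3 (periodicity and initial condition).} Replacing $x$ by $x+L$ shifts $k\to k+1$, so the sum is $L$-periodic and the periodic boundary conditions hold. As $t\to0^+$ we have $\mu(t)\to0$ and $\sigma^2(t)\to0$, so on $[0,L)$ only the $k=0$ term survives in the distributional limit, together with its mass at the identified endpoint. This recovers the impulse at $x=0$. Mass conservation follows because the integral over $[0,L]$ of the periodized sum equals the integral of $p_{\mathrm{sd}}$ over $\mathbb{R}$, which is $1$.

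\textbf{Step 4 (uniqueness and interpretation).} Uniqueness follows either from standard uniqueness for the heat equation on a circle after the transformations $\xi=x-\mu(t)$ and $\tau(t)=\int_0^t D_{\mathrm{1D}}(s)\,\mathrm{d}s$ used in the proof of Lemma~\ref{lem:straight_duct}, or from an energy estimate $\frac{\mathrm{d}}{\mathrm{d}t}\int_0^L p^2\,\mathrm{d}x\le 0$ for the difference of two solutions. The index $k$ counts net circulations: a particle at unwrapped position $y$ sits at $x=y-kL$ in the loop. Those same changes of variables make the whole problem the periodic heat equation, whose Green's function is the theta-function and wrapped-Gaussian form. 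This gives an alternative check via the Fourier series $\frac1L\sum_n e^{-(2\pi n/L)^2\tau(t)}e^{\mathrm{j}2\pi n(x-\mu(t))/L}$ and Poisson summation.

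The main obstacle is not analytic but modeling-related: justifying that the closed loop is faithfully represented as a uniform 1D circle with periodic conditions. This requires constant cross section and the same $u(t)$ and $D_{\mathrm{1D}}(t)$ along the whole loop, with no junction effects, so that the Aris-Taylor reduction of Section~\ref{sec:assumptions} holds uniformly. I would state this as a standing assumption, and then the remaining steps are routine.
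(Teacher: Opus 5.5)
Your proposal is correct and follows the paper's argument. You impose periodic boundary conditions on $[0,L]$ and periodize the straight-duct Gaussian of Lemma~\ref{lem:straight_duct} over all shifts $kL$, which is the paper's second route; your Fourier-series/Poisson-summation check after the substitutions $\xi=x-\mu(t)$ and $\tau(t)$ is its first route of solving directly on the loop, and your convergence, initial-condition, mass-conservation and uniqueness steps only add rigor the paper's proof leaves implicit.
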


\begin{proof}
To obtain $p(x,t)$ for closed-loop systems \eqref{eq:cir}, we first restrict the axial coordinate $x$ to the interval $[0,L]$ and impose periodic boundary conditions, i.e.,
\begin{equation}
p(0,t)=p(L,t), \qquad \partial_x p(0,t)=\partial_x p(L,t).
\label{eq:periodic_bc}
\end{equation}
Then, the corresponding closed-loop solution \eqref{eq:cir} can be obtained in two ways. First, one may solve \eqref{eq:adv_diff} directly on $[0,L]$ following the approach described for the straight-duct case in Section \ref{subsec:str_duct}, subject to the periodic boundary conditions in \eqref{eq:periodic_bc}. Second, one may start directly from \eqref{eq:gaussian_infinite} and construct a periodic representation by summing over all integer shifts of length $L$, which effectively wraps the straight-duct solution \eqref{eq:gaussian_infinite} around a circle of circumference $L$, in line with the approach in \cite{vakilipoor2025cam}.
\end{proof}


Finally, to obtain the received signal, we consider a receiver of length $\Delta x_{\mathrm{Rx}}$ located at position $x = x_{\mathrm{Rx}}$ (see Fig.~\ref{fig:system_model}). The observed particle concentration is obtained by integrating the spatial concentration over the receiver region, i.e.,
\begin{equation}
p_{\mathrm{Rx}}(t)
=
\int_{x_{\mathrm{Rx}}-\Delta x_{\mathrm{Rx}}/2}^{x_{\mathrm{Rx}}+\Delta x_{\mathrm{Rx}}/2}
p(x,t)\,\mathrm{d}x.
\label{eq:pRx}
\end{equation}

\section{Simulation Results}\label{sec:results}

In this section, the proposed analytical model in \eqref{eq:cir}, derived for a closed-loop channel, is validated by comparison to results obtained from \ac{3D} \ac{PBS}. First, we introduce the considered flow-velocity waveforms based on \eqref{eq:u_t}. Subsequently, we present the simulation setup used for the \ac{PBS}. Finally, we analyze the resulting received signals for different system parameters and compare the analytical model with both \ac{PBS} results and the time-invariant model in \cite{vakilipoor2025cam}.

\subsection{Velocity Waveforms}
\label{sec:Vel_waveforms}

For validation, we consider both synthetic and physiologically motivated pulsatile flow velocity waveforms in order to assess the model under controlled as well as realistic flow conditions. 
For all considered waveforms, the \ac{1D} velocity is described by \eqref{eq:u_t}. The corresponding \ac{3D} flow field used in the \ac{PBS} is constructed according to the Womersley model in Section~\ref{subsec:womersley_flow}, ensuring that the \ac{3D} velocity field has the same temporal mean and harmonic content.

\subsubsection{Synthetic Velocity Waveforms}
For the synthetic study, we consider two periodic waveforms. The first is a sinusoidal waveform given by
\begin{equation}
u_{\mathrm{sin}}(t)=\bar{u}\left(1+A\sin(2\pi f t)\right),
\label{eq:u_sin_sim}
\end{equation}
where $A=0.5$ denotes the oscillation amplitude and $f$ is the pulsation frequency. This corresponds to a single-harmonic representation of \eqref{eq:u_t} with $N=1$, $M_1=A$, and $\phi_1=-\pi/2$.

The second synthetic waveform is a pulsed waveform with duty cycle $d=0.2$ and temporal mean $\bar{u}$, defined as
\begin{equation}
u_{\mathrm{pulse}}(t)=
\begin{cases}
\frac{\bar{u}}{d}, & 0 \le \mathrm{mod}(t,T) < dT,\\
0, & dT \le \mathrm{mod}(t,T) < T,
\end{cases}
\label{eq:u_pulse_sim}
\end{equation}
where $T=1/f$ is the pulsation period, and $\mathrm{mod}(t,T)$ denotes the remainder of $t$ divided by $T$. Since this ideal rectangular waveform cannot be directly represented by the finite harmonic expansion in \eqref{eq:u_t}, it is approximated by a truncated Fourier series. In particular, the waveform is expressed in the form of \eqref{eq:u_t} by retaining the first $N$ harmonics of its Fourier series representation, where $N$ is chosen sufficiently large to adequately capture the shape of the pulse. The corresponding coefficients are obtained from the Fourier series of the rectangular waveform and are given by
\begin{equation}
M_n = \sqrt{A_n^2 + B_n^2}, \qquad
\phi_n = \operatorname{atan2}(-B_n, A_n),
\end{equation}
with
\begin{equation}
A_n = \frac{\sin(2\pi n d)}{\pi n d}, \qquad
B_n = \frac{1-\cos(2\pi n d)}{\pi n d},
\end{equation}
for $n=1,\dots,N$. Hence, the pulsed waveform used in the simulations is a finite-harmonic approximation of an ideal rectangular pulse.
The synthetic sinusoidal \eqref{eq:u_sin_sim} and pulsed \eqref{eq:u_pulse_sim} velocity waveforms are illustrated in Fig.~\ref{fig:waveform_sine} and Fig.~\ref{fig:waveform_pulse}, respectively, for $\bar{u}=1\cdot10^{-4}\,\si{\meter\per\second}$ and $f=0.5\,\si{\hertz}$.

\subsubsection{Realistic Velocity Waveform}
In addition to the synthetic waveforms, we consider a physiologically motivated pulsatile waveform adopted from \cite{11157771}. Therefore, we approximate the waveform from \cite{11157771} by fitting the parameters of \eqref{eq:u_t} using $N=12$ harmonics. The resulting fit, denoted by $u_{\mathrm{phys}}(t)$, has fundamental frequency $f = 1.15\,\si{\hertz}$, amplitude coefficients
\[
\begin{aligned}
(M_1,\ldots,M_{12}) = (&0.548,\,0.684,\,0.373,\,0.489,\,0.352,\,0.166,\\
&0.253,\,0.135,\,0.195,\,0.134,\,0.162,\,0.190),
\end{aligned}
\]
and phase shifts
\[
\begin{aligned}
(\phi_1,\ldots,\phi_{12}) = (&-0.869,\,-1.826,\,-3.009,\,3.137,\,1.815,\,1.944,\\
&1.252,\,0.727,\,0.287,\,-0.504,\,-0.605,\,-1.307).
\end{aligned}
\]
The fitted waveform is then normalized with respect to its temporal mean and thus preserves its shape independently of the mean velocity $\bar{u}$, which is selected according to the considered simulation scenario. The resulting fit of the physiologically motivated waveform obtained using \eqref{eq:u_t} is shown in Fig.~\ref{fig:waveform_real} for $\bar{u}=1\cdot10^{-4}\,\si{\meter\per\second}$.

Overall, the velocity model in \eqref{eq:u_t} provides a flexible framework that can capture a wide range of velocity waveforms, including both physiologically realistic pulsatile flows in in-body environments and synthetic waveforms used in controlled experimental or microfluidic setups.

\begin{figure}[t]
  \centering
 \begin{subfigure}[t]{0.95\columnwidth}
   \centering
   \includegraphics[width=\linewidth]{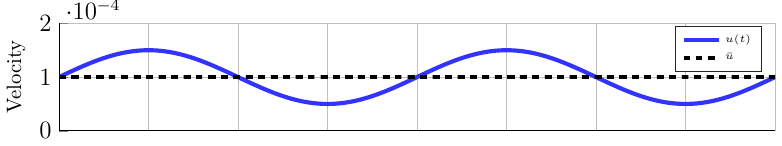}
   \vspace*{-0.6cm}
   \caption{Sinusoidal waveform $u_{\mathrm{sin}}(t)$}
   \label{fig:waveform_sine}
 \end{subfigure}\par\vspace{0.3em}

 \begin{subfigure}[t]{0.95\columnwidth}
   \centering
   \includegraphics[width=\linewidth]{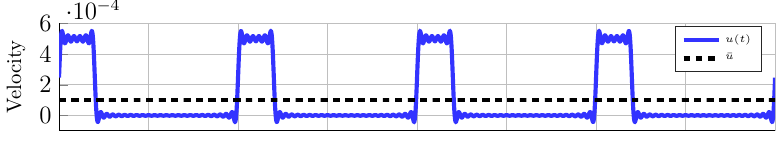}
   \vspace*{-0.6cm}
   \caption{Pulsed waveform $u_{\mathrm{pulse}}(t)$}
   \label{fig:waveform_pulse}
 \end{subfigure}\par\vspace{0.3em}%

 \begin{subfigure}[t]{0.95\columnwidth}
   \centering
  \includegraphics[width=\linewidth]{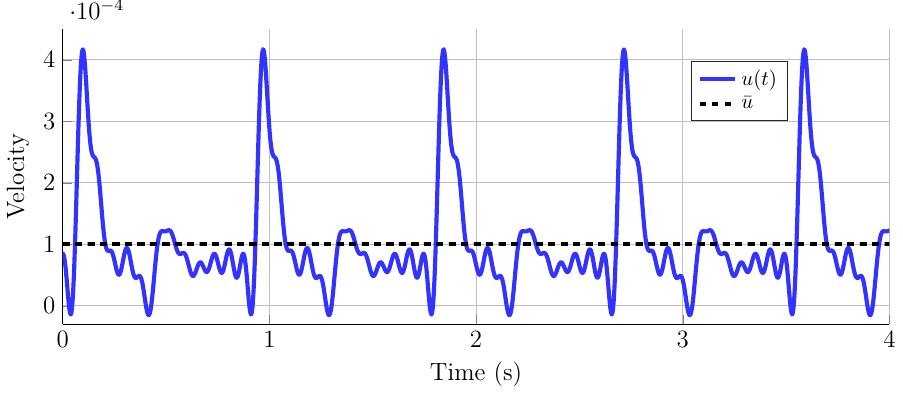}
  \vspace*{-0.6cm}
   \caption{Physiologically motivated waveform $u_{\mathrm{phys}}(t)$}
   \label{fig:waveform_real}
 \end{subfigure}
\vspace*{-0.15cm}
  \caption{\textbf{Velocity waveforms} used for validating the proposed analytical model in \eqref{eq:cir}: synthetic sinusoidal (top) and pulsed (center) waveforms, and a physiologically motivated waveform (bottom), shown for $\bar{u}=1\cdot10^{-4}\,\si{\meter\per\second}$.}
  \label{fig:waveforms_all}
\end{figure}

\subsection{Particle-Based Simulation Setup}

We implement the \ac{PBS} in a \ac{3D} cylindrical channel of radius $R$ and length $L$. The domain is defined for $x \in [0,L]$, and the closed-loop topology is realized by applying periodic boundary conditions in the axial direction, i.e., particles that move beyond $x=L$ (or $x<0$) are reinserted at $\mathrm{mod}(x,L)$. Reflecting boundary conditions are imposed at the cylindrical walls.

At time $t=0$, $N_{\mathrm{p}} = 5\cdot10^5$ particles, which are released at $x=0$, are uniformly distributed over the circular cross section.
The subsequent particle propagation is simulated in discrete time steps with step size $\Delta t = 10^{-4}\,\mathrm{s}$ over a total duration of $T = 20\,\mathrm{s}$. In each time step, particle positions are updated by combining deterministic advection and stochastic diffusion. The axial displacement is described by the \ac{3D} flow velocity $u(r,t)$ in \eqref{eq:u_3d_decomp}. Diffusion is modeled by adding independent Gaussian increments with variance $2D\Delta t$ in all spatial directions.

The received signal is obtained by counting the number of particles within a receiver region of length $\Delta x_{\mathrm{Rx}}$ centered at position $x_{\mathrm{Rx}}$. This region corresponds to a cylindrical slice of volume $V_{\mathrm{Rx}} = \pi R^2 \Delta x_{\mathrm{Rx}}$. The particle concentration is estimated by normalizing the number of particles inside the receiver volume by the total number of released particles.

The parameters were chosen according to~\cite{Jamali2019ChannelModeling,vakilipoor2025cam} such that the system operates in the dispersive regime and the \ac{1D} model assumptions are satisfied. Unless stated otherwise, the parameters are fixed to
$D = 5\cdot10^{-9}\,\si{\square\meter\per\second}$,
$R = 50\,\si{\micro\meter}$,
$L = 1\,\si{\milli\meter}$\footnote{Although physiological vascular loop lengths are typically much larger (cf.~\cite{vakilipoor2025cam}), we adopt a reduced loop length of $L=1\,\si{\milli\meter}$ in our simulations to enable clear visualization of repeated circulations within a reasonable simulation time and facilitates comparison with the 3D PBS results.},
$x_{\mathrm{Rx}} = 0.3\,\si{\milli\meter}$, and
$\Delta x_{\mathrm{Rx}} = 0.1\,\si{\milli\meter}$.
Moreover, we assume a fluid density of $\rho = 1060\,\si{\kilogram\per\meter\cubed}$
and a dynamic viscosity of $\mu = 3\times10^{-3}\,\si{\pascal\second}$,
corresponding to a kinematic viscosity
$\nu = \mu/\rho \approx 2.83\times10^{-6}\,\si{\meter\squared\per\second}$. For these parameters, the resulting Womersley numbers $a_n = R\sqrt{n\omega/\nu}$ remain below unity for all considered harmonics, indicating operation in the low-Womersley-number regime and justifying the use of the analytical model in \eqref{eq:D1D}.

\subsection{Results for Synthetic Velocity Waveforms}

\begin{figure}[t]
  \centering
 \begin{subfigure}[t]{0.48\textwidth}
   \centering
  \includegraphics[width=\linewidth]{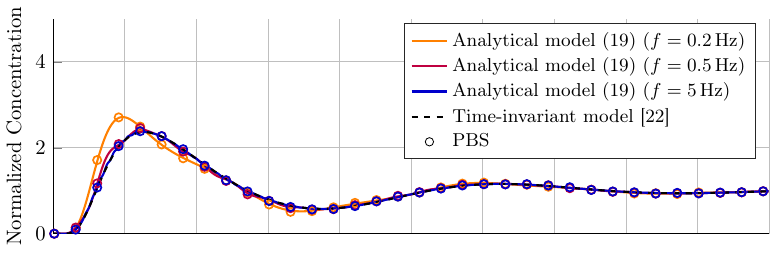}
  \vspace*{-0.6cm}
   \caption{Sinusoidal waveform}
   \label{fig:sine_f}
 \end{subfigure}\par\vspace{0.3em}%
 \begin{subfigure}[t]{0.48\textwidth}
   \centering
   \includegraphics[width=\linewidth]{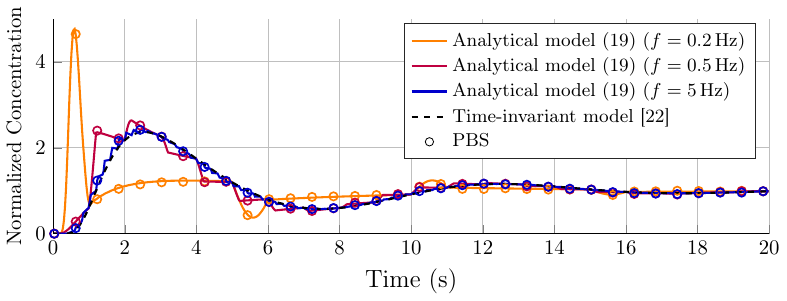}
   \vspace*{-0.6cm}
   \caption{Pulsed waveform}
   \label{fig:pulse_f}
 \end{subfigure}
\vspace*{-0.5cm}
\caption{Normalized received signal for different pulsation frequencies $f$, for sinusoidal (top) and pulsed (bottom) velocity waveforms}
\label{fig:synthetic_f}
\end{figure}

First, we investigate the synthetic velocity waveforms introduced in Section~\ref{sec:Vel_waveforms} (cf. Fig.~\ref{fig:waveform_sine} and Fig.~\ref{fig:waveform_pulse}), where either the pulsation frequency $f$ or the mean velocity $\bar{u}$ is varied.
For better comparability, all results are normalized by the equilibrium concentration
$p_\infty=\Delta x_{\mathrm{Rx}}/L$, obtained from $\lim_{t\to\infty} p(x,t)=1/L$.

Figure~\ref{fig:synthetic_f} shows the normalized received signal for different pulsation frequencies $f$ for both synthetic velocity waveforms, for $\bar{u}=1\cdot10^{-4}\,\si{\meter\per\second}$. 
Since the time-invariant model in \cite{vakilipoor2025cam} assumes a constant flow velocity, it does not depend on the pulsation frequency. Therefore, it yields a single reference curve, which is identical for all values of $f$. First, we observe that the results obtained from the proposed analytical model in \eqref{eq:cir} (solid curves) are in excellent agreement with the \ac{PBS} results (circle markers) for all considered frequencies and both waveforms. 
Furthermore, it can be observed that, for increasing $f$, the received signal obtained from the time-variant model in \eqref{eq:cir} converges to that of the corresponding steady-flow system with constant velocity $\bar{u}$ (dashed curve). This behavior can be explained by a temporal averaging effect: at higher frequencies, the pulsation period becomes small compared to the particle propagation time, such that particles experience multiple velocity oscillations during their propagation and effectively perceive the mean flow velocity.

\begin{figure}[t]
  \centering
 \begin{subfigure}[t]{0.48\textwidth}
   \centering
   \includegraphics[width=\linewidth]{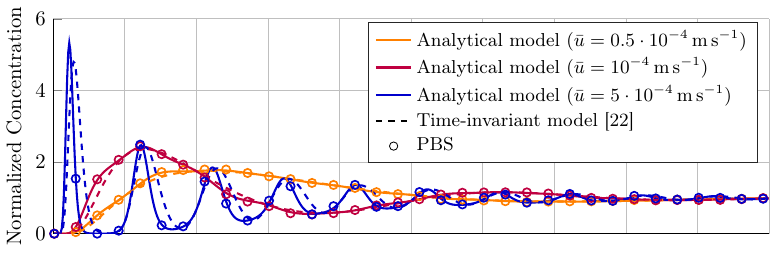}
   \vspace*{-0.6cm}
   \caption{Sinusoidal waveform}
   \label{fig:sine_U}
 \end{subfigure}\par\vspace{0.3em}%
 \begin{subfigure}[t]{0.48\textwidth}
   \centering
   \includegraphics[width=\linewidth]{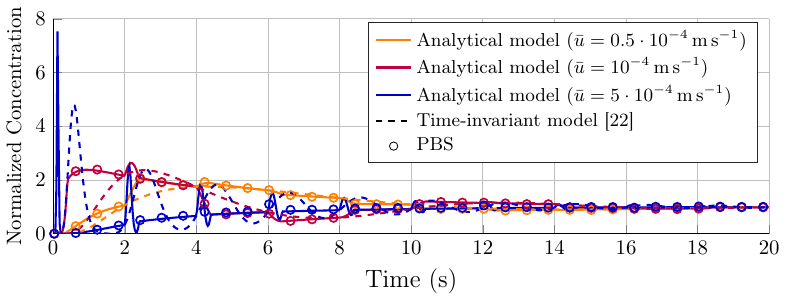}
   \vspace*{-0.6cm}
   \caption{Pulsed waveform}
   \label{fig:pulse_U}
 \end{subfigure}
\vspace*{-0.5cm}
\caption{Normalized received signal for different mean velocities $\bar{u}$, for sinusoidal (top) and pulsed (bottom) velocity waveforms}  \label{fig:synthetic_U}
\end{figure}

Figure~\ref{fig:synthetic_U} shows the normalized received signal for different values of the mean velocity $\bar{u}$ for both synthetic waveforms, for fixed $f=0.5\,\si{\hertz}$. Similar to the results in Fig.~\ref{fig:synthetic_f}, the results obtained from the proposed analytical model in \eqref{eq:cir} (solid curves) show excellent agreement with the \ac{PBS} results (circle markers) for all considered $\bar{u}$ values.
Moreover, as the temporal mean velocity $\bar{u}$ increases, the received signal peaks occur earlier and become sharper, which is consistent with faster advective transport. In addition, more pronounced deviations from the corresponding steady-flow response can be observed, indicating a stronger influence of the pulsatile flow. This behavior can be attributed to the fact that, for larger $\bar{u}$, particle transport becomes increasingly dominated by advection, such that particles more closely follow the instantaneous velocity variations and are therefore more strongly influenced by the pulsations.
In contrast, for decreasing $\bar{u}$, the received signal becomes increasingly similar to that of the corresponding steady-flow system. This is because diffusion plays a more dominant role in this regime, effectively smoothing out the temporal variations of the flow and reducing the impact of pulsations on the particle trajectories.

\subsection{Results for Realistic Velocity Waveform}

Next, we consider the physiologically motivated velocity waveform introduced in Section~\ref{sec:Vel_waveforms} (cf. Fig.~\ref{fig:waveform_real}), and vary the mean velocity $\bar{u}$, the receiver position $x_{\mathrm{Rx}}$, and the diffusion coefficient $D$.
As in the previous experiments, all results are normalized by the equilibrium concentration $p_\infty$.

Figure~\ref{fig:pbs} shows the results from the proposed model in \eqref{eq:cir}, from \ac{PBS} (circle markers), and from the time-invariant model in \cite{vakilipoor2025cam} (dashed curves), for different receiver positions $x_{\mathrm{Rx}}$ with fixed mean flow velocity $\bar{u}=2\cdot10^{-4}\,\si{\meter\per\second}$ (top plot), for different temporal mean velocities $\bar{u}$ with fixed receiver position $x_{\mathrm{Rx}}=0.3\,\mathrm{mm}$ (center plot), and for different diffusion coefficients $D$ with fixed $x_{\mathrm{Rx}}=0.3\,\mathrm{mm}$ and $\bar{u}=2\cdot10^{-4}\,\si{\meter\per\second}$ (bottom plot). First, we observe that the proposed model again is in excellent agreement with the results from \ac{PBS} in all considered scenarios.
Moreover, the time-invariant model from \cite{vakilipoor2025cam} is not able to capture the pulsations, while it still approximates the overall shape of the received signal. The influence of the pulsatile flow becomes less pronounced for increasing receiver distance $x_\mathrm{Rx}$, decreasing temporal mean velocity $\bar{u}$, increasing diffusion coefficient $D$, and over time. These observations are consistent with those obtained for the synthetic velocity waveforms and can be explained by the fact that, in all these cases, the role of advective transport is reduced relative to diffusion, leading to more dispersive system dynamics~\cite{Jamali2019ChannelModeling}. 
Finally, for increasing $D$, the received signal becomes more spread in time and the amplitude of the first peak decreases. More importantly, the influence of the pulsatile flow becomes less pronounced, as reflected by the reduced deviation between the time-variant and the corresponding time-invariant model in Fig.~\ref{fig:pbs_D}. This behavior can be attributed to the stronger diffusive spreading, which smooths out the temporal variations of the flow and reduces the impact of pulsations on the particle trajectories.

Overall, the proposed model provides a versatile analytical framework that enables the characterization of pulsatile flow effects for both synthetic and realistic velocity waveforms, while also identifying regimes in which pulsations become negligible and simpler steady-flow models can be employed instead.

\begin{figure}[t]
  \centering
    
 \begin{subfigure}[t]{0.99\columnwidth}
   \centering
   \includegraphics[width=\linewidth]{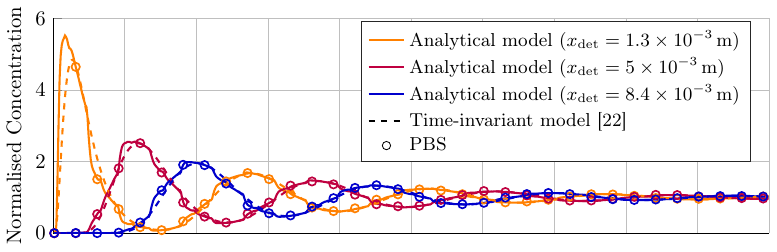}
   \vspace*{-0.6cm}
   \caption{Different $x_{\mathrm{Rx}}$ for fixed $\bar{u}=2\cdot10^{-4}\,\si{\meter\per\second}$.}
   \label{fig:pbs_xdet}
 \end{subfigure}\par\vspace{0.3em}
 \begin{subfigure}[t]{0.99\columnwidth}
   \centering
   \includegraphics[width=\linewidth]{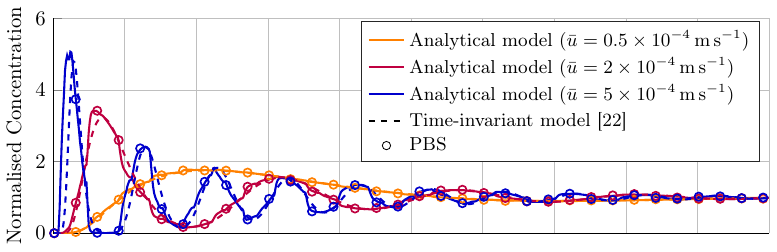}
   \vspace*{-0.6cm}
   \caption{Different $\bar{u}$ for fixed $x_{\mathrm{Rx}}=0.3\,\mathrm{mm}$.}
   \label{fig:pbs_ubar}
 \end{subfigure}\par\vspace{0.3em}
 \begin{subfigure}[t]{0.99\columnwidth}
   \centering
   \includegraphics[width=\linewidth]{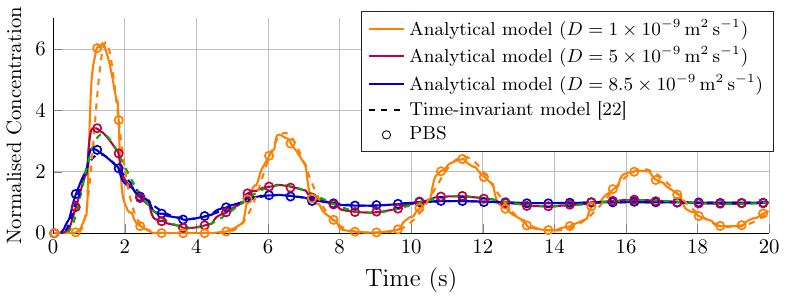}
   \vspace*{-0.6cm}
   \caption{Different $D$ for fixed $x_{\mathrm{Rx}}=0.3\,\mathrm{mm}$ and $\bar{u}=2\cdot10^{-4}\,\si{\meter\per\second}$.}
   \label{fig:pbs_D}
 \end{subfigure}
\vspace*{-0.1cm}
  \caption{\textbf{Physiologically motivated pulsatile flow:} Normalized received signals for different receiver positions $x_{\mathrm{Rx}}$, temporal mean velocities $\bar{u}$, and diffusion coefficients $D$.}
 \Description{Three vertically stacked plots showing normalized received concentration versus time for different receiver positions, mean flow velocities, and diffusion coefficients, comparing the proposed analytical model, particle-based simulations, and the time-invariant model.}
  \label{fig:pbs}
\end{figure}

\section{Conclusion}\label{sec:conclusion}

In this paper, an analytical channel model for dispersive closed-loop \ac{MC} systems with pulsatile flow was proposed. The model incorporates a time-variant flow velocity and yields an analytical expression for the \ac{CIR} in the form of a wrapped Normal distribution with time-variant mean and variance. The model was validated by \ac{3D} \acp{PBS} for both synthetic periodic and a physiologically motivated pulsatile velocity waveform, showing excellent agreement. The results demonstrate that pulsatile flow can noticeably influence the temporal shape of the received signal compared to the corresponding steady-flow case. Future work includes the incorporation of realistic physiological data into the proposed framework, a comparison with experimental measurements, and the study of how spatial variations of the flow field and the attenuation of pulsatility along the propagation path influence signal propagation.

\bibliographystyle{ACM-Reference-Format}
\bibliography{references}

\end{document}